\providecommand{\tabularnewline}{\\}
\providecommand{\algorithmname}{Algorithm}
\theoremstyle{remark}
\theoremstyle{plain}
\newtheorem{prop}{\protect\propositionname}
\theoremstyle{definition}
 \newtheorem{example}{\protect\examplename}
\providecommand{\examplename}{Example}
\providecommand{\propositionname}{Proposition}
\providecommand{\remarkname}{Remark}
\begin{document}
\title{Penalised t-walk MCMC}
\author{Felipe J Medina-Aguayo, J Andr\'es Christen}
\date{Centro de Investigaci\'on en Matem\'aticas (CIMAT), A.C. \\ Jalisco S/N, Col. Valenciana, CP: 36023, Guanajuato, Gto, M\'exico}
\maketitle

\begin{abstract}
Handling multimodality that commonly arises from complicated statistical models remains a challenge. Current Markov chain Monte Carlo (MCMC) methodology tackling this subject is based on an ensemble of chains targeting a product of power-tempered distributions. Despite the theoretical validity of such methods, practical implementations typically suffer from bad mixing and slow convergence due to the high-computation cost involved. In this work we study novel extensions of the t-walk algorithm, an existing MCMC method that is inexpensive and invariant to affine transformations of the state space, for dealing with multimodal distributions. We acknowledge that the effectiveness of the new method will be problem dependent and might struggle in complex scenarios; for such cases we propose a post-processing technique based on pseudo-marginal theory for combining isolated samples.

\end{abstract}

\section{Introduction}

Performing Bayesian analyses has become routine in many areas of application and research and, in turn, these are usually carried out using Monte Carlo methodology \citep[see e.g.][]{RobertNCasella_2013}. A popular and effective method, belonging to this class of methods, is Markov Chain Monte Carlo (MCMC), which is based on the construction of a Markov chain that converges to the desired target or posterior distribution \citep[see][for example]{Gamerman2015}.

However, naive implementations of MCMC may result in chains that require a prohibitively large number of iterations in order to be of any use. For example, implementing the celebrated Metropolis--Hastings (MH) algorithm \citep{Metropolis1953,Hastings1970} using highly local proposals may result in chains that do not explore the target fully in a reasonable amount of time. Thus, constructing informative proposals is essential for obtaining useful samplers. This may be achieved by incorporating gradient information from the target, as done by Langevin and Hamiltonian algorithms \citep{Roberts1996,Neal2011a}. Another possibility, that avoids computing derivatives, is adaptive MCMC \citep{Haario2001,Roberts2009} aiming at learning the shape of the target in an online manner and modifying the proposal accordingly. Other approaches rely solely on extending the state space, and consequently extending the target distribution, for constructing good proposals \citep{Foreman-Mackey2013a}. An other example of the latter is the t-walk algorithm \citep{Christen2010}, which will be our subject of study.  It is based on 4 different moves that are invariant under affine transformations \citep[the ``traverse'', the ``walk'', the ``hop'' and the ``blow'' moves, see][for details]{Christen2010}.  The twalk is implemented in an oficial R package (rtwalk), in Python, Matlab, C and C++, \url{https://www.cimat.mx/~jac/twalk/}.

Nevertheless, and similarly to other MCMC methods, in the presence
of multimodality the t-walk may become trapped in a subset of modes,
especially when two or more modes are separated by large regions (valleys)
of very small probability. The chain becomes trapped since
the proposals are not bold enough for jumping into regions that are
far relative to where the chain currently lives. In this respect, in the Bayesian perspective, naively one may argue that multimodal posteriors are merely the result of a bad modelling process or due to a poor choice of priors. However, multimodality is not an artificial phenomenon, resulting from poor decisions by the user, since it commonly arises and is unavoidable in many well studied Bayesian problems.  In particular, inference involving non-linear regressors, as in Bayesian inverse problems, may result in complex multimodal posteriors. Some examples inlcude the FitzHugh-Nagumo model \citep{FitzHugh1961,Nagumo1962}, a soil organic matter decomposition model from \citep{Sierra2012}, or the epidemiological SIR model describing a black plague outbreak in 1666 \citep{Massad2004,JonoskaStojkova2019}. There is indeed a need to develop better MCMC methods to address multimodality in posterior distributions arising from modern Bayesian inferences. 

Currently, the state-of-the-art method for dealing with multimodality is parallel tempering (PT) \citep[see][for example]{Swendsen1986,Geyer1991}.  PT is based on several interacting chains that converge to a product of tempered distributions, one of which is the desired posterior. The chains interact via a series of swap moves that in principle help exploring multiple modes. The challenge, however, is the computational cost involved in the algorithm as one usually requires a large number of tempered distributions and iterations \citep{Woodard2009a} for the method to work correctly. Hence, this and similar approaches might not deal well with multimodality as the computational expense will render the approach unfeasible.  The t-walk can sample from posteriors with several modes with different scales \citep[see][fig. 2 and 3]{Christen2010} but, as with most other methods, once these modes are separated by areas of very low density, then it may get trapped in one particular region of the sample space.
Various staring points should always be used in the practice of MCMC, increasing the chance of finding how and where  our chain got trapped in different modes.  How to join these samples from different regions into one that has the correct target is not obvious.

The t-walk has proved its value and has been used in several studies \cite{Capistran2016b,Ward2012,Aquino-Lopez2018,Capistran2012,Villa2018,Rubio2014,Rubio2016,Rubio2017,Rubio2015} and is currently used routinely by several users in some data analysis software in paleoecology \cite{Blaauw2011} and recently in the COVID19 epidemic modeling \citep{Capistran2020,Acuna-Zegarra2020}.  Certainly, it is not redundant to improve the performance of the t-walk in the presence of multimodal targets. Moreover, the ideas presented here could in principle be imported to improve the performance of other MCMC algorithms.

\subsection*{Outline}

In this work we propose studying the t-walk algorithm in the presence of multimodality by including an additional 5-th move to the standard 4 moves already mentioned.  This additional move is constructed using a ``penalised proposal''. The idea is to construct proposals, easy to sample from, that penalise a region where the chain currently stands and that, ideally, favour regions with meaningful posterior mass that are located further away. Such a penalisation is developed in Section \ref{sec:PenProp} and followed by some illustrative examples in Section \ref{sec:examples}. In case the penalisation fails, in Section \ref{sec:combining} our aim is to combine two different chains, targeting the same posterior, but that were unable to jump between modes. That is, they got stuck in two separate regions of the state space. Hence, these pseudo-samples are distributed according to different sub-posteriors that need to be combined in a particular way for obtaining a true sample from the target. The procedure we propose is based on pseudo-marginal methodology \citep{Andrieu2009} which introduces unbiased estimators of intractable quantities within an MCMC algorithm. Finally, in Section \ref{sec:discussion} we provide a final discussion and possible extensions of this work.

\section{Constructing penalised proposals}\label{sec:PenProp}

\subsection{MCMC and the t-walk}

Our interest is to describe a posterior or target distribution $\pi$
on some measurable space $\left(\mathcal{X},\mathcal{B}\left(\mathcal{X}\right)\right)$,
 $\mathcal{X}\subseteq\mathbb{R}^{d}$. We also
assume throughout that the distributions presented have densities
with respect to the Lebesgue measure. As it commonly arises in Bayesian inference, the
target density has the following generic form
\begin{align*}
\pi\left(x\right) & =\frac{\gamma\text{\ensuremath{\left(x\right)}}}{Z},\quad\text{for }x\in\mathcal{X},
\end{align*}
where $\gamma$ is the unnormalised version of $\pi$ and where $Z=\int_{\mathcal{X}}\gamma\left(dx\right)$
is the intractable normalising constant.

An effective method for describing the posterior $\text{\ensuremath{\pi}}$
is through simulation using Markov chain Monte Carlo (MCMC) methods.
These methods work by constructing a Markov chain $\left(X_{t}\right)_{t\ge0}$,
on $\mathcal{X}$, such that the target $\pi$ is the limiting stationary
distribution. Thus, expectations of the form 
\begin{align*}
\pi\left(h\right) & :=\mathbb{E}_{X\sim\pi}\left[h\left(X\right)\right]=\int_{\mathcal{X}}h\left(x\right)\pi\left(dx\right),\quad\text{for }h:\mathcal{X}\rightarrow\mathbb{R},
\end{align*}
may be estimated, under suitable conditions of aperiodicity and irreducibility
for the chain \citep[see e.g.][]{Roberts2004}, via ergodic averages
\begin{align*}
\widehat{\pi}_{b,T}\left(h\right)= & \frac{1}{T}\sum_{t=b+1}^{b+T}h\left(X_{t}\right),\quad\text{for }b\geq0,T\geq1.
\end{align*}
The integer $b$ in the previous equation defines the ``burn-in period''
of the chain, meaning that the portion $\left(X_{t}\right)_{t\leq b}$
is discarded when estimating $\pi\left(h\right)$, as a pseudo-transient stage of the chain.

The t-walk algorithm introduced in \cite{Christen2010} belongs to the aforementioned
class of MCMC methods, which differs to standard implementations by
exploring an extended target distribution $\bar{\pi}$, defined on
$\left(\mathcal{X}^{2},\mathcal{B}\left(\mathcal{X}^{2}\right)\right)$,
and with density given by
\begin{align*}
\text{\ensuremath{\bar{\pi}\left(x,y\right)}}= & \pi\left(x\right)\pi\left(y\right),\quad\text{for }\left(x,y\right)\in\mathcal{X}^{2}.
\end{align*}
Consequently, the resulting chain $\left(X_{t},Y_{t}\right)_{t\geq0}$
takes values on $\mathcal{X}^{2}$ and converges marginally towards
$\pi$ in the sense that either $\left(X_{t}\right)_{t}$ or $\left(Y_{t}\right)_t$
converges to $\pi$. In this case, working on a larger space is useful
as one is able to construct informative proposals (combining both points, that is proposals of the kind $q( \cdot | X_t, Y_t)$ or $q( \cdot | X_t, Y_t)$) which are key for
exploring efficiently the state space and thus for converging rapidly
towards $\pi$.

At the moment, the 4 different moves implemented in the t-walk algorithm may not be able to overcome the complexity that arises from a target having multiple modes. For such cases we are proposing a 5-th move, which we have termed the ``penalty'' move, that proposes values from regions that are relatively distant to the current state of the chain $(X_t,Y_t)$. As the name suggests, this is achieved by penalising a neighbourhood that is close to $(X_t,Y_t)$, which creates a flattening effect on some standard density from a distribution that is easy to sample from. The right plot of Figure \ref{fig:Fig0_penalty} shows an example of this effect.

Therefore, we introduce a family of penalty functions
$\left\{ \varphi_{xy}\right\} _{\left(x,y\right)\in\mathcal{X}^{2}}$
such that $\varphi_{xy}:\mathcal{X}\rightarrow\left[0,1\right]$ modifies
any given proposal in such way that sampling near a region informed
by $\left(x,y\right)$ is not very likely. For instance, assuming
the current state of the t-walk chain is $\left(X_t,Y_t\right)=\left(x,y\right)$ the penalised
version of a proposal distribution $g\left(\cdot \mid x,y \right)$ defined on $\left(\mathcal{X},\mathcal{B}\left(\mathcal{X}\right)\right)$
will have a density
\begin{align}
\breve{g}\left(w \mid x,y \right) & \propto g\left(w \mid x,y \right)\varphi_{xy}\left(w\right),\text{\ensuremath{\quad\text{for }w\in\mathcal{X}}.}\label{eq:penalised_proposal}
\end{align}

Notice that the t-walk chain lies on $\mathcal{X}^{2}$ whereas the proposals
$g$ and $\breve{g}$ are defined on $\mathcal{X}$. This apparent problem is overcome in the following section where we describe a simple transformation for obtaining two components, each belonging to $\mathcal{X}$, out of a single draw from $\breve{g}$. In addition, it is worth mentioning that the choice of $g$ is in principle arbitrary; in fact one could consider penalising one of the four moves in the t-walk. However, we do not follow this approach as the existing moves in the t-walk algorithm only update one of the two components at a time (either $x$ or $y$), which could be problematic since components may get stuck in different modes. 

Instead, for reasons that will become clear in the following sections, our choice for $g$ will be simply a symmetric distribution that is easy to sample from, e.g. a Gaussian or t-distribution. Moreover, we will discuss some feasible approaches for defining and sampling from $\breve{g}$, both using the gradient of $\log \pi$ and not. We finish this section introducing possible choices for $\varphi_{xy}$.  

The general idea of the penalty function $\varphi_{xy}$ is that at its centre, say $\mu_{xy}$, the value of resulting penalised density $\breve{g}$ will be exactly zero, and as we start moving away from $\mu_{xy}$ the value of the density will increase. As an example we could consider flipping and renormalising a Gaussian density, i.e.
\begin{align*}
\varphi_{xy}\left(w\right)=1-\exp\left(-\frac{1}{2}\left(w-\mu_{xy}\right)^{T}\Sigma_{xy}^{-1}\left(w-\mu_{xy}\right)\right), 
\end{align*}
for some $\Sigma_{xy}$. Notice that $\varphi_{xy}\in[0,1)$ and when multiplied by some $g$ the resulting penalised proposal $\breve{g}$ will be flattened around $\mu_{xy}$, as shown in Figure~\ref{fig:Fig0_penalty}.

\begin{figure}[!ht]
\begin{centering}
\includegraphics[width=0.48\linewidth]{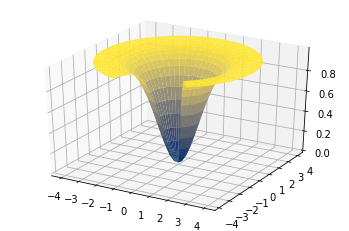}\includegraphics[width=0.48\linewidth]{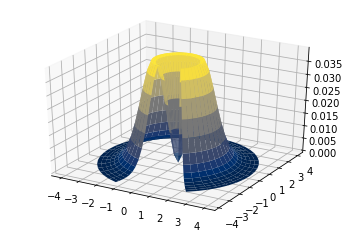}
\par\end{centering}
\caption{Flipped Gaussian penalty function $\varphi_{xy}$ (left) and resulting penalised Gaussian density $\breve{g}$ (right), both centred at the origin.}\label{fig:Fig0_penalty}
\end{figure}

Examples of other penalties include the following continuously differentiable functions:
\begin{itemize}
\item (Flipped t-density)
\begin{align*}
\varphi_{xy}\left(w\right)=1-1/\left(1+\left(w-\mu_{xy}\right)^{T}\Sigma_{xy}^{-1}\left(w-\mu_{xy}\right)\right);    
\end{align*}
\item (Flipped bump function) 
\begin{align*}
\varphi_{xy}\left(w\right)&=1-\exp\left(1-1/\left(1-\left(w-\mu_{xy}\right)^{T}\Sigma_{xy}^{-1}\left(w-\mu_{xy}\right)\right)\right)\\
&\quad \times \mathbf{1}\left(\left(w-\mu_{xy}\right)^{T}\Sigma_{xy}^{-1}\left(w-\mu_{xy}\right)<1\right);    
\end{align*}
with $\mathbf{1}\left(A\right)$ denoting the indicator function
on the event $A$;
\end{itemize}
where $\mu_{xy}$ and $\Sigma_{xy}$ are location and
scale parameters determined by the current state $\left(x,y\right)$.
In general terms we will restrict to penalty functions $\varphi_{xy}$
of the form
\begin{align}
\varphi_{xy}\left(w\right) & =1-\rho_{0}^{-1}\text{\ensuremath{\rho}}\left(\Sigma_{xy}^{-1/2}\left(w-\mu_{xy}\right)\right),\label{eq:penalty_locScale}
\end{align}
where $\rho$ is some standard symetric distribution and therefore $\rho_{0}^{-1}\text{\ensuremath{\rho}}\left(\Sigma_{xy}^{-1/2}\left(w-\mu_{xy}\right)\right)$ is a location-scale family;  $\rho_{0}=\ensuremath{\rho\left(0\right)}$.

\subsection{Penalty using gradient information}\label{sec:penalty}

We construct a density $\breve{g}$ that we can easily simulate
from and that is informed by the gradient of $\log\pi$.
To define $g$
%For easiness suppose that $\mathcal{X}=\mathbb{R}$ and
consider
a symmetric distribution $g\left(\cdot \mid x,y \right)$ centred at $\tilde{\mu}_{xy}$
with dispersion parameter $\tilde{\Sigma}_{xy}^{2}$.
For example, a Gaussian or t-distribution with central parameter $\tilde{\mu}_{xy}=(x+y)/2$
and dispersion $\tilde{\Sigma}_{xy}=\text{diag}(|x-y|^2)$. Notice that the previous
parameters may be different to those from the penalty function $\varphi_{xy}$
introduced in the previous section. In any case, we assume that it is computationally straightforward to simulate from $g$.

The following proposition is an extension of \cite[][Proposition 3.1]{Livingstone2019} and will be useful for constructing, and simulating from,  $\text{\ensuremath{\breve{g}}}_{xy}$.

\begin{prop}
Suppose $\mathcal{X}\subseteq \mathbb{R}^d$, and assume that a penalty function $\tilde{\varphi}_{xy}$ satisfies,
for some quantity $\tilde{\mu}_{xy}\in\mathcal{X}$,
\begin{align}
\tilde{\varphi}_{xy}\left(\tilde{\mu}_{xy}-w\right) & =1-\tilde{\varphi}_{xy}\left(\tilde{\mu}_{xy}+w\right),\quad w\in\mathcal{X},\label{eq:symm_condn}
\end{align}
then for any symmetric distribution $g\left(\cdot \mid x,y \right)$ around $\tilde{\mu}_{xy}$, and defined on $\mathbb{R}^d$
\begin{align*}
\int_{\mathbb{R}^d}g\left(w \mid x,y \right)\tilde{\varphi}_{xy}\left(w\right)dw & =\frac{1}{2}.
\end{align*}
\end{prop}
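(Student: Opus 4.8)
The plan is to exploit the symmetry of $g$ about $\tilde\mu_{xy}$ together with the reflection identity \eqref{eq:symm_condn} via a change of variables $w \mapsto 2\tilde\mu_{xy} - w$. First I would substitute $w = \tilde\mu_{xy} + v$ so that the integral becomes $\int_{\mathbb{R}^d} g(\tilde\mu_{xy}+v \mid x,y)\,\tilde\varphi_{xy}(\tilde\mu_{xy}+v)\,dv$. By symmetry of $g$ about its centre, $g(\tilde\mu_{xy}+v \mid x,y) = g(\tilde\mu_{xy}-v \mid x,y)$, and by \eqref{eq:symm_condn}, $\tilde\varphi_{xy}(\tilde\mu_{xy}+v) = 1 - \tilde\varphi_{xy}(\tilde\mu_{xy}-v)$.

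Denote the target integral by $I = \int_{\mathbb{R}^d} g(w\mid x,y)\,\tilde\varphi_{xy}(w)\,dw$. Using the substitution above and then replacing $v \mapsto -v$ (a measure-preserving map on $\mathbb{R}^d$), I would rewrite $I = \int_{\mathbb{R}^d} g(\tilde\mu_{xy}+v\mid x,y)\bigl(1 - \tilde\varphi_{xy}(\tilde\mu_{xy}-v)\bigr)dv = \int_{\mathbb{R}^d} g(\tilde\mu_{xy}-v\mid x,y)\,dv - \int_{\mathbb{R}^d} g(\tilde\mu_{xy}+v\mid x,y)\,\tilde\varphi_{xy}(\tilde\mu_{xy}-v)\,dv$. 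The first term equals $1$ since $g(\cdot\mid x,y)$ is a density; in the second term, flipping $v\mapsto -v$ and again using symmetry of $g$ turns it back into $I$. Hence $I = 1 - I$, giving $I = 1/2$.

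The only genuine subtlety is bookkeeping the two reflections (one coming from symmetry of $g$, one from the change of variables) so that the penalty's argument and the density's argument line up to reproduce $I$ rather than some other integral; writing everything in the shifted variable $v$ centred at $\tilde\mu_{xy}$ keeps this transparent. A minor technical point worth a remark is that \eqref{eq:symm_condn} is stated for $w\in\mathcal{X}$ while $g$ is assumed defined on all of $\mathbb{R}^d$; since the proposition posits $g$ on $\mathbb{R}^d$, I would simply take $\tilde\varphi_{xy}$ extended to $\mathbb{R}^d$ with the identity holding there (or note that support considerations make this immaterial). Integrability is automatic since $0\le\tilde\varphi_{xy}\le 1$ and $g$ is a probability density, so no convergence issues arise and Fubini-type manipulations are unnecessary.
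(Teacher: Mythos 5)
Your proof is correct, and the key mechanism is the same as the paper's: reflect through the centre $\tilde{\mu}_{xy}$, use the symmetry of $g$ together with the complement identity \eqref{eq:symm_condn}, and conclude from $I = 1 - I$. The difference is purely in execution: the paper carries out this reflection hyperoctant by hyperoctant, pairing each of the $2^d$ hyperoctants with its opposite and then summing, whereas you apply the single global measure-preserving reflection $v \mapsto -v$ on all of $\mathbb{R}^d$ at once. Your version is shorter and arguably cleaner, since the hyperoctant decomposition adds nothing beyond what the global change of variables already provides; the paper's octant bookkeeping is redundant once one notices that the reflection is a bijection of the whole space. Your closing remarks on integrability (bounded penalty times a density) and on extending $\tilde{\varphi}_{xy}$ from $\mathcal{X}$ to $\mathbb{R}^d$ address genuine loose ends in the statement that the paper's proof silently glosses over as well.
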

\begin{proof}
For simplicity we omit the dependence from $(x,y)$ throughout the proof. First notice that
\begin{align*}
\int_{\mathbb{R}^d}g\left(w\right)\tilde{\varphi}\left(w\right)dw = \int_{\mathbb{R}^d} g\left(u + \tilde{\mu} \right)\tilde{\varphi}\left(u+ \tilde{\mu}\right)du.
\end{align*}

Consider now one of the $2^d$ hyperoctants defined on $\mathbb{R}^d$ and, without loss of generality, take the hyperoctant $\mathbb{R}^{+d}$ where all the components are positive. Using \eqref{eq:symm_condn} and the symmetry of $g$, the above integral restricted to $\mathbb{R}^{+d}$ becomes
\begin{align*}
\int_{\mathbb{R}^{+d}}& g\left(u + \tilde{\mu}\right)\tilde{\varphi}\left(u+ \tilde{\mu}\right)du =\\
    &= \int_{\mathbb{R}^{+d}} g\left(u + \tilde{\mu} \right)du - \int_{\mathbb{R}^{+d}} g\left(u + \tilde{\mu} \right) \tilde{\varphi}\left(\tilde{\mu} - u \right)du \\
    &= \int_{\mathbb{R}^{+d}} g\left(u + \tilde{\mu} \right)du - \int_{\mathbb{R}^{+d}} g\left( \tilde{\mu} -u \right) \tilde{\varphi}\left(\tilde{\mu} - u \right)du \\
    &= \int_{\mathbb{R}^{+d}} g\left(u + \tilde{\mu} \right)du - \int_{\mathbb{R}^{-d}} g\left( \tilde{\mu} + u \right) \tilde{\varphi}\left(\tilde{\mu} + u \right)du,
\end{align*}
where $\mathbb{R}^{-d}$ is the opposite hyperoctant of $\mathbb{R}^{+d}$, i.e. we made the change of variable $u$ to $-u$. Therefore, 
\begin{align*}
\int_{\mathbb{R}^{+d}\bigcup \mathbb{R}^{-d}}& g\left(u + \tilde{\mu} \right)\tilde{\varphi}\left(u+ \tilde{\mu}\right)du = \int_{\mathbb{R}^{+d}} g\left(u + \tilde{\mu} \right)du.
\end{align*}

Following the same logic for the remaining $2^d-1$ hyperoctants, and adding all integrals to integrate over the whole space we obtain
\begin{align*}
2 \int_{\mathbb{R}^{d}} & g\left(u + \tilde{\mu} \right)\tilde{\varphi}\left(u+ \tilde{\mu}\right)du = \int_{\mathbb{R}^{d}} g\left(u + \tilde{\mu} \right)du = 1 .
\end{align*}
\end{proof}

A direct implication from the previous result is that 
\begin{align*}
\breve{g}\left(w \mid x,y\right)=2 g\left(w \mid x,y\right)\tilde{\varphi}_{xy}\left(w\right),   
\end{align*}
which does not involve an intractable normalising constant; moreover, one can simulate
from the associated distribution easily using Algorithm \ref{alg:GradPenProp}, as proven in \citep[][Proposition 3.2]{Livingstone2019}. In order to see why this is a valid procedure, suppose the centre $\tilde{\mu}_{xy}=0$, then returning a value $w$ as proposed by $g$ occurs with probability $\tilde{\varphi}_{xy}(w)$. In contrast, according the algorithm the sign of $w$ will be flipped with probability $1-\tilde{\varphi}_{xy}(w)=\tilde{\varphi}_{xy}(-w)$, but (due to symmetry of $g$) proposing $-w$ is equally as likely as proposing $w$. Hence, instead of discarding $w$ after not being accepted, the value can be ``reused'' by returning $-w$ which had the same probability of being proposed.

\begin{algorithm}
INPUT: Current value for the chain $\left(x,y\right)\in\mathcal{X}^2$ and centre $\tilde{\mu}_{xy}$.

OUTPUT: Draw from $\breve{g}$ involving the penalty $\tilde{\varphi}_{xy}$.

\begin{enumerate}
\item Simulate and store $w\leftarrow W\sim g\left(\cdot \mid x,y\right)$ and
$\omega \leftarrow \Omega \sim Unif\left(0,1\right)$;
\item If $\omega \leq \tilde{\varphi}_{xy}\left(w\right)$ return $w$;

otherwise return $2\tilde{\mu}_{xy}-w$.
\end{enumerate}
\caption{Penalised proposal sampler using $\tilde{\varphi}_{xy}$}\label{alg:GradPenProp}
\end{algorithm}

The problem, however, is still how to choose a suitable $\tilde{\varphi}_{xy}$. Leaving the formality aside for a moment, suppose we could sample from the reciprocal of $\pi$, i.e. a distribution with density proportional to $1/\pi(x)$. Incorporating such a move into an MCMC sampler could be useful as one would be able to escape from the attraction of a local mode. We perform this idea, in an approximate manner, by defining the penalty 
\begin{align}
\tilde{\varphi}_{xy}\left(w\right) & =\frac{1}{1+\exp\left(\nabla l\left(\tilde{\mu}_{xy}\right)\cdot\left(w-\tilde{\mu}_{xy}\right)\right)},\label{eq:penalty_tilde}
\end{align}
where $l\left(u\right):=\log\left(\pi\left(u\right)\right)$. The choice in \eqref{eq:penalty_tilde} approximates Barker's acceptance probability since, using a Taylor approximation,
\begin{align*}
\tilde{\varphi}_{xy}\left(w\right) & \approx\frac{1}{1+\exp\left(l\left(w\right)-l\left(\tilde{\mu}_{xy}\right)\right)}=\left(1+\left(\frac{1/\pi\left(w\right)}{1/\pi\left(\tilde{\mu}_{xy}\right)}\right)^{-1}\right)^{-1}.
\end{align*}
Hence $\tilde{\varphi}_{xy}$ approximates the acceptance probability of an MCMC sampler targeting the reciprocal of $\pi$. Finally, notice that \eqref{eq:penalty_tilde} also satisfies the symmetry condition in \eqref{eq:symm_condn}; therefore, a sample from the resulting proposal $\breve{g}$ could be seen as a crude approximation from the reciprocal of $\pi$.

Having simulated $W$ in Algorithm 1, we can propose new values $\left(U,V\right)$ for the chain using either
\begin{align}\label{eq:transfUV_1}
U & =x+(W-\tilde{\mu}_{xy}),\quad V =y+(W-\tilde{\mu}_{xy}),
\end{align}
or
\begin{align}\label{eq:transfUV_2}
U & =y+(W-\tilde{\mu}_{xy}),\quad V =x+(W-\tilde{\mu}_{xy}).
\end{align}
We now present a simple toy example that illustrates the advantage of introducing a penalised move for a target with very-well separated modes.

\begin{example}
Consider a 2-dimensional Gaussian mixture target with 2 components and with density given by
\begin{align*}
\pi\left(x_{1},x_{2}\right) & =0.5\mathcal{N}\left(\left(x_{1},x_{2}\right)\mid\mu_{\text{1}},\Sigma_{1}\right)+0.5\mathcal{N}\left(\left(x_{1},x_{2}\right)\mid\mu_{\text{2}},\Sigma_{2}\right),
\end{align*}
where $\mu_{1}=\left(0,0\right)$, $\mu_{2}=\left(20,-20\right)$ and
\begin{align*}
\Sigma_{1}=\left(\begin{array}{cc}
1 & 0.1\\
0.1 & 1
\end{array}\right), \quad \Sigma_{2}=\left(\begin{array}{cc}
4^{2} & 0.8\times4\times5\\
0.8\times4\times5 & 5^{2}
\end{array}\right).
\end{align*}

The target involves two very different components, one with low and another with high correlation. Exploring these components separately would not represent a challenge for the t-walk algorithm, however the relatively large separation can be problematic. Figure \ref{fig:Fig1_2d2Mix} shows trace plots of the t-walk algorithm (left plot) and the penalised version using the gradient described above (right plot). Notice that the t-walk chain is 10 times longer (500 thousand iterations) than the penalised version; the reason for this was to take into account that the latter is more computationally
expensive due to gradient computations. Nonetheless, observe how the
t-walk only jumps to the second Gaussian component once, whereas
the penalised version visits the two modes regularly.

\begin{figure}[!ht]
\begin{centering}
\includegraphics[width=0.45\linewidth]{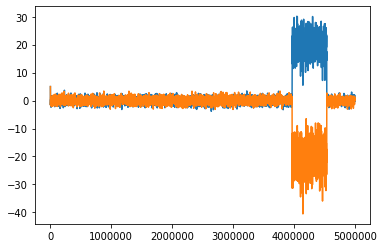}\includegraphics[width=0.45\linewidth]{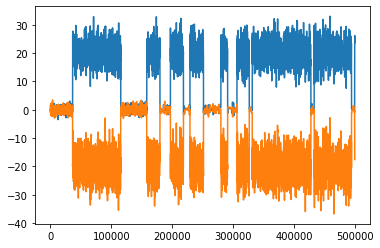}
\par\end{centering}
\caption{Trace plots of the 2 components in Example 1 for the t-walk (left column) and the penalised t-walk (right column).}\label{fig:Fig1_2d2Mix}
\end{figure}

\end{example}

We now turn to a penalised version which in essence is much simpler but that will not require the computation of gradients.

%\fm OTHER POINTS THAT WE MIGHT WANT TO DISCUSS:
%\begin{itemize}
%\item It is possible to use a modified target by taking %$l_{xy}\left(u\right):=\log\left(\pi\left(u\right)\varphi_{xy}\left(u\right)\right)$
%instead of simply $l\left(u\right)$. The corresponding location and
%dispersion parameters for $\varphi_{xy}$ are still $\mu_{xy}$ and
%$\Sigma_{xy}$.
%\item To be clear, there are two sets of parameters used for simulation.
%Those without a tilde ($\mu_{xy}$ and $\Sigma_{xy}$) are exclusive
%to the original penalty $\varphi_{xy}$, whilst those with a tilde
%($\tilde{\mu}_{xy}$ and $\tilde{\Sigma}_{xy}$) correspond to the
%parameters for simulating from the symmetric distribution $g$. 
%\item If $\mu_{xy}=\tilde{\mu}_{xy}$ then %$\nabla\log\left(\varphi_{xy}\left(\tilde{\mu}_{xy}\right)\right)$
%will not be defined. In contrast, $\Sigma_{xy}=\tilde{\Sigma}_{xy}$
%does not cause trouble.
%\item A simple approach is to take $\mu_{xy}=\left(x+y\right)/2$, $\tilde{\mu}_{xy}=x$
%or $\tilde{\mu}_{xy}=y$, and $\Sigma_{xy}=\tilde{\Sigma}_{xy}=\left\Vert x-y\right\Vert %_{\infty}^{2}\mathbb{I}_{d\times d}$.
%\item Other possibility is to take $\mu_{xy}$ as the running sample mean
%of the chain and $\Sigma_{xy}$ as the running variance-covariance
%matrix.
%\end{itemize}
%\mf
%

\subsection{Penalty without a gradient}

Having gradient information is beneficial but not widely applicable,
e.g. Bayesian inverse problems commonly involve numerical solutions to differential equations, thus obtaining gradients with respect to fixed parameters is not straightforward. In this section we explore much simpler penalised proposals that are easy to evaluate and to sample from.

The idea is to propose a point that is distant to the current location of the chain which is summarised by the single point $\mu_{xy}\in\mathcal{X}$, e.g. $\mu_{xy}=(x+y)/2$.  We consider once more a penalised proposal as in (\ref{eq:penalised_proposal})
with explicit density
\begin{align*}
\breve{g}\left(w \mid x,y\right) & =\frac{g\left(w \mid x,y\right) \varphi_{xy}\left(w\right)}{\mathfrak{Z}_{xy}},
\end{align*}
where $\mathfrak{Z}_{xy}=\int_{\mathcal{X}}g\left(dw \mid x,y\right)\varphi_{xy}\left(w\right)$ is the normalization constant and $g$ is an easy to simulate, heavy-tailed distribution, typically a multivariate t-distribution. Simulating then from $\breve{g}$
can proceed via rejection sampling as described in Algorithm \ref{alg:RejSamp}, and due to the fact that $\mathfrak{Z}_{xy}\leq 1$ since $\varphi_{xy} \in (0,1)$ for any $(x,y)\in\mathcal{X}^2$.

\begin{algorithm}
INPUT: Current value for the chain $\left(x,y\right)\in\mathcal{X}^2$.

OUTPUT: Draw from $\breve{g}$ involving the penalty $\varphi_{xy}$.

\begin{enumerate}
\item Simulate and store $w\leftarrow W\sim g\left(\cdot \mid x,y\right)$ and
$\omega \leftarrow \Omega \sim Unif\left(0,1\right)$
\item If $\varphi_{xy}\left(w\right)<\omega$ go back to step 1;

else go to step 3;
\item Return $w$.
\end{enumerate}
\caption{Penalised proposal sampler using $\varphi_{xy}$}\label{alg:RejSamp}
\end{algorithm}

Notice that if $g$ also belongs to a location-scale family 
\begin{equation}
g\left(w \mid x,y\right)=\frac{\text{1}}{\left|\Upsilon_{xy}^{1/2}\right|}g\left(\Upsilon_{xy}^{-1/2}\left(w-\mu_{xy}\right)\right),\label{eq:g_locScale}
\end{equation}
where $\Upsilon_{xy}^{1/2}=\kappa\Sigma_{xy}^{1/2}$ for some $\kappa>0$,
then the normalising constant $\mathfrak{Z}_{xy}$ satisfies
\begin{align}
\mathfrak{Z}_{xy} & =1-\rho_{0}^{-1}\int_{\mathcal{X}}\frac{\text{1}}{\left|\Upsilon_{xy}^{1/2}\right|}g\left(\Upsilon_{xy}^{-1/2}\left(w-\mu_{xy}\right)\right)\text{\ensuremath{\rho}}\left(\Sigma_{xy}^{-1/2}\left(w-\mu_{xy}\right)\right)dw\nonumber \\
 & =1-\rho_{0}^{-1}\int_{\mathcal{X}}g\left(w\right)\text{\ensuremath{\rho}}\left(\Sigma_{xy}^{-1/2}\Upsilon_{xy}^{1/2}w\right)dw\nonumber \\
 & =1-\rho_{0}^{-1}\int_{\mathcal{X}}g\left(w\right)\text{\ensuremath{\rho}}\left(\kappa w\right)dw=:\mathfrak{Z}.\label{eq:normConst}
\end{align}

Regarding Algorithm \ref{alg:RejSamp}
we now compute the acceptance probability within the rejection sampling
step, which will also be independent of the current state $(x,y)$. This is crucial for controlling the extra computational cost of the penalised t-walk algorithm, as the expected number of iterations before terminating Algorithm \ref{alg:RejSamp} is the same irrespective of the where the chain is located.

\begin{prop}
Suppose the chain is at state $\left(x,y\right)$, also assume the
penalty $\text{\ensuremath{\varphi_{xy}}}$ satisfies (\ref{eq:penalty_locScale})
and the proposal $g$ belongs to a location-scale family as stated
in (\ref{eq:g_locScale}). Then the overall acceptance probability
in the rejection sampling from Algorithm \ref{alg:RejSamp} is independent of
$\left(x,y\right)$.
\end{prop}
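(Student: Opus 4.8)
The plan is to identify the quantity called the ``overall acceptance probability'' with the per-pass success probability of the rejection loop in Algorithm~\ref{alg:RejSamp} (since a rejection sampler accepts with probability one eventually, this is the only sensible reading, and it is the quantity governing the expected cost mentioned just before the statement), and then to show that this probability equals the normalising constant $\mathfrak{Z}_{xy}$, which the computation in \eqref{eq:normConst} already shows does not depend on $(x,y)$.

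First I would condition on the proposed value. Given $W=w$ with $w\sim g(\cdot\mid x,y)$, step~2 of Algorithm~\ref{alg:RejSamp} returns $w$ precisely when $\Omega\le\varphi_{xy}(w)$; since $\Omega\sim\mathrm{Unif}(0,1)$ and $\varphi_{xy}(w)\in(0,1)$, this occurs with conditional probability $\varphi_{xy}(w)$. Integrating against the law of $W$, the probability that a single pass of the loop terminates is
\[
\int_{\mathcal{X}} g\left(w\mid x,y\right)\varphi_{xy}\left(w\right)\,dw = \mathfrak{Z}_{xy}.
\]
Next I would plug in the location-scale forms: $\varphi_{xy}$ via \eqref{eq:penalty_locScale} and $g$ via \eqref{eq:g_locScale}, and perform the change of variables $u=\Upsilon_{xy}^{-1/2}(w-\mu_{xy})$. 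The Jacobian $|\Upsilon_{xy}^{1/2}|$ cancels the normalising factor in \eqref{eq:g_locScale}, and the relation $\Upsilon_{xy}^{1/2}=\kappa\Sigma_{xy}^{1/2}$ gives $\Sigma_{xy}^{-1/2}(w-\mu_{xy})=\Sigma_{xy}^{-1/2}\Upsilon_{xy}^{1/2}u=\kappa u$. This is exactly the chain of equalities in \eqref{eq:normConst}, which yields $\mathfrak{Z}_{xy}=1-\rho_{0}^{-1}\int g(u)\rho(\kappa u)\,du=:\mathfrak{Z}$, an expression involving only the fixed scalar $\kappa$ and the reference densities $g$ and $\rho$. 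Hence the per-pass acceptance probability is $\mathfrak{Z}$, independent of $(x,y)$. To close, I would note that the number $N$ of passes needed to output one draw from $\breve{g}$ is then geometric with parameter $\mathfrak{Z}$, so $\mathbb{E}[N]=1/\mathfrak{Z}$ is likewise independent of $(x,y)$, which is the practically relevant consequence.

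The only step needing care is the domain of integration: the affine map $w\mapsto\mu_{xy}+\Upsilon_{xy}^{1/2}u$ carries $\mathcal{X}$ to a set that in general depends on $(x,y)$, so the resulting integral is genuinely $(x,y)$-free only when $\mathcal{X}$ is invariant under the relevant affine transformations — in particular when $\mathcal{X}=\mathbb{R}^{d}$, as is implicitly used in \eqref{eq:normConst}. I would make this assumption explicit (or, for bounded $\mathcal{X}$, observe that $g$ is sampled on $\mathbb{R}^{d}$ and the support constraint is absorbed into $\varphi_{xy}$). Apart from this bookkeeping about the support, the argument is a direct change of variables with no substantive obstacle.
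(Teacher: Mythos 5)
Your proposal is correct and follows essentially the same route as the paper: identify the per-trial acceptance probability with $\mathbb{E}[\varphi_{xy}(W)]=\mathfrak{Z}_{xy}$, then use the location-scale change of variables from \eqref{eq:normConst} to show this equals the $(x,y)$-free constant $\mathfrak{Z}$, so that the trial count is $\mathrm{Geom}(\mathfrak{Z})$. Your closing remark about the support $\mathcal{X}$ needing to be affine-invariant (effectively $\mathcal{X}=\mathbb{R}^{d}$) is a caveat the paper leaves implicit, and is worth noting.
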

\begin{proof}
Let $N$ be the number of trials needed for obtaining a sample from
$\breve{g}$ in the rejection sampling procedure. All the variables generated
up tp the $n$-th trial are $U_{1:n}\stackrel{iid}{\sim}Unif\left(0,1\right)$
and $W_{1:n}\stackrel{iid}{\sim}g\left(\cdot \mid x,y\right)$, therefore
\begin{align*}
\mathbb{P}\left[N>n\right] & =\mathbb{P}\left[\varphi_{xy}\left(W_{1}\right)<U_{1},\dots,\varphi_{xy}\left(W_{n}\right)<U_{n}\right]\\
 & =\prod_{i=\text{1}}^{n}\mathbb{P}\left[\varphi_{xy}\left(W_{i}\right)<U_{i}\right]\\
 & =\left(\mathbb{P}\left[\varphi_{xy}\left(W_{1}\right)<U_{1}\right]\right)^{n}\\
 & =:\left(1-p_{xy}\right)^{n}
\end{align*}
which implies $N\sim Geom\left(p_{xy}\right)$. Thus, since $\varphi_{xy}\in\left[0,\text{1}\right]$
\begin{align*}
p_{xy} & =1-\mathbb{E}\left[\mathbb{P}\left[\varphi_{xy}\left(W_{1}\right)<U_{1}\mid W_{1}\right]\right]\\
 & =1-\mathbb{E}\left[1-\varphi_{xy}\left(W_{1}\right)\right]\\
 & =1-\mathbb{E}\left[\rho_{0}^{-1}\text{\ensuremath{\rho}}\left(\Sigma_{xy}^{-1/2}\left(W_{1}-\mu_{xy}\right)\right)\right]\\
 & =1-\rho_{0}^{-1}\int_{\mathcal{X}}g\left(w\right)\text{\ensuremath{\rho}}\left(\kappa w\right)dw=\mathfrak{Z},
\end{align*}
where the last line follows from (\ref{eq:normConst}) due to the
fact that $g$ belongs to a location-scale family. Therefore,
the probability of terminating the rejection sampling procedure at
the $n$-th trial is
\begin{align*}
\mathbb{P}\left[N=n\right] & =\mathfrak{Z}\left(1-\mathfrak{Z}\right)^{n-1},
\end{align*}
which is independent of the current state $\left(x,y\right)$.
\end{proof}

When both $g$ and $\rho$ are $d$-dimensional Gaussian distributions
we have a closed expression for $\mathfrak{Z}$, namely
\begin{align*}
\mathfrak{Z} & =1-\left(2\pi\right)^{-d/2}\int_{\mathcal{X}}\exp\left\{ -0.5w^{T}\mathbb{I}_{d\times d}\left(1+\kappa^{2}\right)w\right\} dw\\
 & =1-\left|\mathbb{I}_{d\times d}\left(1+\kappa^{2}\right)\right|^{-1/2}\\
 & =1-\left(1+\kappa^{2}\right)^{-d/2}.
\end{align*}
Notice that as $d\rightarrow\infty$ then $\mathfrak{Z}\rightarrow1$,
therefore one might want to scale $\kappa$ with the dimension of
$\mathcal{X}$. Table \ref{tab:acceptProbs} shows the estimated acceptance probabilities
in the rejection sampling procedure for different choices of the proposal
$g$ and the distribution $\rho$ involved in the penalty. The default configuration of the penalised t-walk sampler was set to $\kappa=3$, multivariate t-distribution with 2 df for the penalty, and multivariate t-distribution with 1 df for $g$.

\begin{table}
\begin{centering}
\begin{tabular}{|c|c|c|c|}
\hline 
\multicolumn{4}{|c|}{Gaussian penalty}\tabularnewline
\hline 
\hline 
d / $\kappa$ & 2 & 3 & 4\tabularnewline
\hline 
2 & 0.8392 & 0.9148 & 0.94856\tabularnewline
\hline 
4 & 0.9516 & 0.9832 & 0.9925\tabularnewline
\hline 
8 & 0.9897 & 0.9983 & 0.9997\tabularnewline
\hline 
16 & 0.9991 & 0.9999 & $>1-10^{-5}$\tabularnewline
\hline 
\end{tabular}%
\begin{tabular}{|c|c|c|c|}
\hline 
\multicolumn{4}{|c|}{Multiv. t-distn penalty with 2 df}\tabularnewline
\hline 
\hline 
d / $\kappa$ & 2 & 3 & 4\tabularnewline
\hline 
2 & 0.8671 & 0.9275 & 0.9551\tabularnewline
\hline 
4 & 0.9802 & 0.9931 & 0.9970\tabularnewline
\hline 
8 & 0.9993 & 0.9999 & $>1-10^{-5}$\tabularnewline
\hline 
16 & \multicolumn{3}{c|}{$>1-10^{-8}$}\tabularnewline
\hline 
\end{tabular}
\par\end{centering}
\centering{}\caption{Acceptance rates for Gaussian and multivariate-t penalties proposing from a multivariate t-distribution with 1 df.}\label{tab:acceptProbs}
\end{table}

As discussed earlier, having simulated $W \sim \breve{g}\left(\cdot \mid x,y\right)$ using Algorithm \ref{alg:RejSamp}, which can be thought as a new value for the centre $\mu_{xy}$, we can obtain values for $\left(U,V\right)$ through the simple transformation in \eqref{eq:transfUV_1} or \eqref{eq:transfUV_2}. 
Recalling from \eqref{eq:normConst} that the constant $\mathfrak{Z}_{xy}$ does not depend on $(x,y)$, which implies that $\mathfrak{Z}_{xy}/\mathfrak{Z}_{uv}=1$, the MH ratio in the corresponding iteration of the t-walk becomes
\begin{align}\label{eq:ratioMH}
r\left(x,y;u(w),v(w)\right)&=\frac{\gamma\left(u\right)\gamma\left(v\right) g\left(\mu_{xy} \mid u,v \right) \varphi_{uv}\left(\mu_{xy}\right) }{ \gamma\left(x\right)\gamma\left(y\right)g\left(w \mid x,y\right)\varphi_{xy}\left(w\right)}\times\frac{\mathfrak{Z}_{xy}}{\mathfrak{Z}_{uv}} \nonumber\\
&=\frac{\gamma\left(u\right)\gamma\left(v\right) g\left(\mu_{xy} \mid u,v \right) \varphi_{uv}\left(\mu_{xy}\right) }{ \gamma\left(x\right)\gamma\left(y\right)g\left(w \mid x,y\right)\varphi_{xy}\left(w\right)}.
\end{align}

For completeness, and before considering further numerical examples, we present Algorithm \ref{alg:PenalisedMove} which contains a full description of the penalised moved implemented in the t-walk.

\begin{algorithm}
INPUT: Current value for the t-walk chain $\left(x,y\right)\in\mathcal{X}^2$.

OUTPUT: New value for the chain.

\begin{enumerate}
\item Using the location $\mu_{xy}$ and scale $\Sigma_{xy}$, simulate and store $w\leftarrow W\sim \breve{g}\left(\cdot \mid x,y\right)$ (e.g. via Algorithm \ref{alg:GradPenProp} or \ref{alg:RejSamp});
\item Simulate and store $\omega \leftarrow \Omega \sim Unif\left(0,1\right)$;
\item Obtain $(u,v)$ using either \eqref{eq:transfUV_1} or \eqref{eq:transfUV_2};
\item Compute $r\left(x,y;u,v\right)$ from \eqref{eq:ratioMH};
\item If $\omega\leq r\left(x,y;u,v\right)$ return $(u,v)$;

else return $(x,y)$.
\end{enumerate}
\caption{Penalised move in the t-walk}\label{alg:PenalisedMove}
\end{algorithm}

\section{Simulations}\label{sec:examples}

In this section we present several examples that despite being artificial they are very illustrative for assessing the utility of the penalty move introduced to the t-walk algorithm. As mentioned before, all examples consider the default setting of the penalised t-walk sampler with $\mu_{xy}=(x+y)/2$, $\Sigma_{xy}=\text{diag}(|x-y|^2)$, $\kappa=3$, a multivariate t-distribution with 2 df for the penalty $\varphi_{xy}$, and a multivariate t-distribution with 1 df for $g$.

\begin{example}
We consider the same bi-modal target introduced in Example 1. Figure \ref{fig:Fig2_2d2Mix} shows the trace plots, similar to those from Figure \ref{fig:Fig1_2d2Mix}, and the corresponding kernel density estimation (KDE) plots using 5 million iterations for both the original and penalised t-walk. Observe that, similarly to the case with a penalty involving gradients, the penalised t-walk is able to jump often between modes. The estimated integrated autocorrelation time (IAT), which is commonly regarded as a measure of efficiency \cite{Geyer_1992}, was 387 for the original t-walk and 955 for penalised version. These results are clearly misleading since one could conclude wrongly that the t-walk without penalisation is performing much better since its IAT is less than half to the penalised version. In order to properly compare these quantities we would need to run the chains for much longer; nonetheless this simple experiment shows that relying simply on the IAT for assessing the quality of a sample from a multimodal target may lead to incorrect results.

On average the number of penalised moves was set to 10 percent of the total number of iterations. The average number of penalised moves that were accepted was only 3 percent, which produced a decrease in the global acceptance from 33 percent (using the standard t-walk) to 16 percent (when introducing the penalised move).

%IAT for TWalk: 387.2. Acceptance rates for the Walk, Traverse, Blow
%and Hop kernels:{[}0.45921695 0.21167687 0.28377004 0. {]} Global
%acceptance rate: 0.33467

%IAT for Penalised TWalk: 955.2. Acceptance rates for the Walk, Traverse,
%Blow, Hop and Penalty kernels:{[}0.2357084 0.10328997 0.12011747 0.34690136
%0.02826361{]} Global acceptance rate: 0.15639
\end{example}

\begin{figure}[!ht]
\begin{centering}
\includegraphics[width=0.45\linewidth]{./Plots/2d/trace_TWalk}\includegraphics[width=0.45\linewidth]{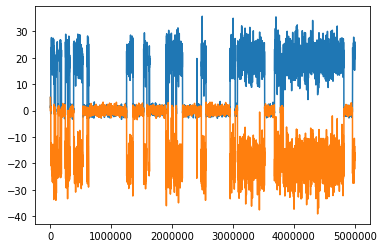}
\par\end{centering}
\begin{centering}
\includegraphics[width=0.45\linewidth]{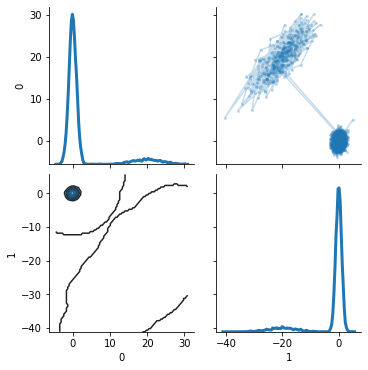}\includegraphics[width=0.45\linewidth]{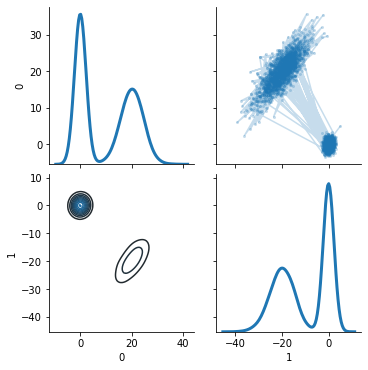}
\par\end{centering}
\begin{centering}
\includegraphics[width=0.45\linewidth]{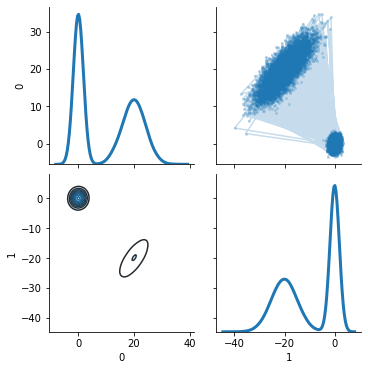}
\par\end{centering}
\caption{Plots from Example 2 for t-walk (left column plots)
and penalised t-walk (right column plots). Top-row figures correspond to trace plots for the 2 components of the chain. Middle-row figures correspond to the KDE plots estimated from the corresponding samples. Bottom figure shows the KDE plots from ground truth by sampling 20 thousand i.i.d. variables.}\label{fig:Fig2_2d2Mix}
\end{figure}

\begin{example}
Now consider a 3-dimensional Gaussian mixture target with 9 components given by
\begin{alignat*}{1}
\pi\left(x_{1:3}\right) & \propto\sum_{i=1}^{8}\mathcal{N}\left(x_{1:3}\mid\mu_{i},\sigma_{i}^{2}\mathbb{I}_{3\times3}\right)+\mathcal{N}\left(x_{1:3}\mid\left(30,30,30\right),10\mathbb{I}_{3\times3}\right),
\end{alignat*}
where each $\mu_{i}$ is of the form $\mu_{i}= 10 \ \left((-1)^{a_{i}},(-1)^{b_{i}},(-1)^{c_{i}}\right)$ for some $a_{i},b_{i},c_{i}\in\left\{ 0,1\right\} $ such that $\left(a_{i},b_{i},c_{i}\right)\neq\left(a_{j},b_{j},c_{j}\right)$
if $i\neq j$. In words, the means for the first 8 components are located at each of the 8 vertices from the cube of length 20 and centred at the origin. Regarding the variance terms, the values for $\left\{ \sigma_{i}^{2} \right\}_i$ range from 0.25 to 10.

Figure \ref{fig:Fig3_3d9Mix} shows once more how the standard t-walk struggles visiting all the different modes. In fact the t-walk gets stuck in a single mode whereas the penalised version is able to visit all 9 modes in the first million iterations. Clearly, the run for the penalised version requires more iterations as the global acceptance rate is only 4 percent. However, despite the unappealing slow convergence shown by the penalised version, the method is able to expose the multimodality issue.

%IAT for TWalk: 54.13. Acceptance rates for the Walk, Traverse, Blow, Hop kernels and Global: [0.47665754 0.17184986 0.24700855 0.56670732 0.325832]

%IAT for Penalised TWalk: 2415.47. Acceptance rates for the Walk, Traverse, Blow, Hop, Penalty kernels and Global: [0.08106228 0.01705361 0.02656289 0.08194326 0.00531299 0.044743]

\end{example}
\begin{figure}[!ht]
\begin{centering}
\includegraphics[width=0.45\linewidth]{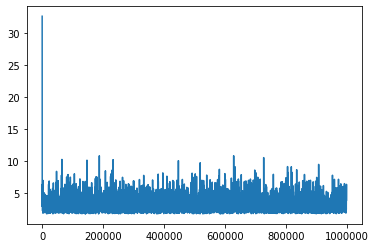}\includegraphics[width=0.45\linewidth]{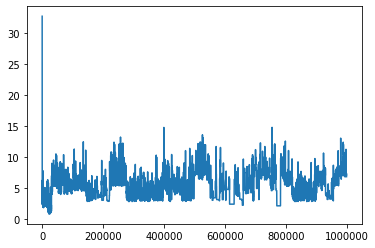}
\par\end{centering}
\begin{centering}
\includegraphics[width=0.45\linewidth]{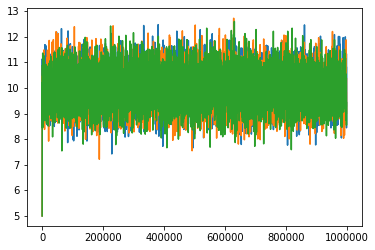}\includegraphics[width=0.45\linewidth]{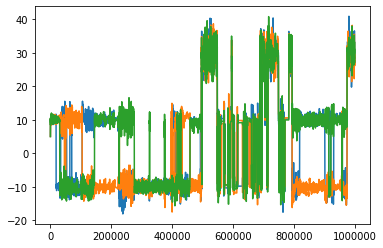}
\par\end{centering}
\begin{centering}
\includegraphics[width=0.45\linewidth]{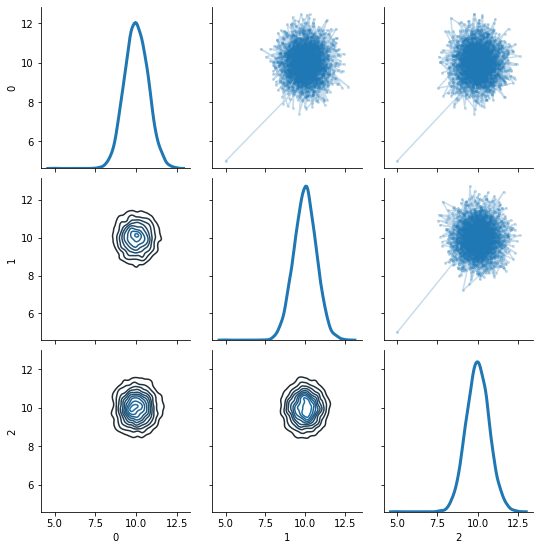}\includegraphics[width=0.45\linewidth]{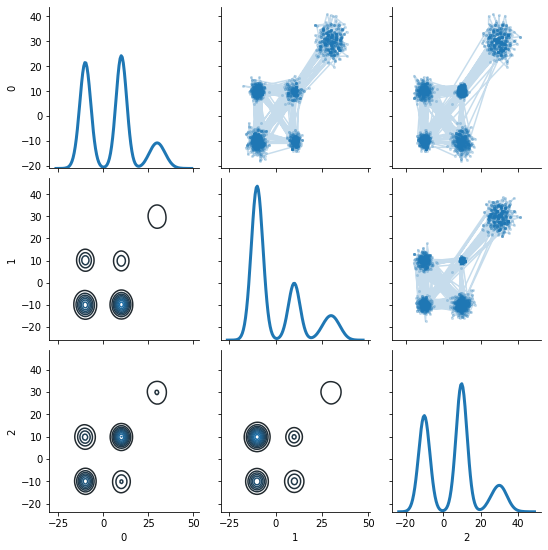}
\par\end{centering}
\caption{Plots from Example 3 for t-walk (left column plots)
and penalised t-walk (right column plots). Top-row figures correspond to trace plots for the value of $-\log(\pi)$. Middle-row plots are the traces for the 3 components of the chain. Bottom-row figures correspond to the KDE plots estimated from the corresponding samples.}\label{fig:Fig3_3d9Mix}
\end{figure}

\begin{example}
For this final example consider a 10-dimensional target from a 3 component mixture of banana-shaped distributions given by
\begin{align*}
\pi\left(x_{1:10}\right) & \propto\sum_{i=1}^{3}\mathcal{N}\left(\phi_{b_{i}}\left(x_{1:10}\right)\mid\mu_{i},\Sigma\right),
\end{align*}
where $\mu_{1}=\left(-3,-3,\dots,-3\right)$, $\mu_{2}=\left(0,0,\dots,0\right)$,
$\mu_{3}=\left(3,3,\dots,3\right)$,
\begin{align*}
    \phi_{b}\left(x_{1:d}\right)=\left(x_{1}+bx_{1}^{2}-100b,x_{2}, \dots,x_{d}\right),
\end{align*}
$b_{1:3}=\left(-0.03,0,0.03\right)$, and $\Sigma=\text{diag}\left(100,1,1,\dots,1\right)$. 

The shape of this type of densities can be very challenging for MCMC samplers as they may get stuck in the tails of the distribution. From Figure \ref{fig:Fig4_10d3Mix} we can observe that the standard t-walk is able to explore the 3 modes with a global acceptance of 13 percent. In contrast, the penalised version showed a global acceptance of 10 percent, with an acceptance for the penalty move of only 0.3 percent. The IAT for the t-walk was 630 versus the marginally lower value of 578 for the penalised version. One can argue that for this example the benefit from the penalised t-walk is minimal, but does not interfere with the t-walk performance.

% Using 10mill thin to 1mill
%[0.21310554 0.04237516 0.0921252  0.23869313 0.         0.1283916 ]
%[0.18724063 0.03792999 0.08259146 0.21486851 0.00325624 0.1022305 ]
%AutoMaxlag: maxlag= 4673.
%[[630.00977242]]
%AutoMaxlag: maxlag= 3283.
%[[578.20918529]]

\end{example}
\begin{figure}[!ht]
\begin{centering}
\includegraphics[width=0.45\linewidth]{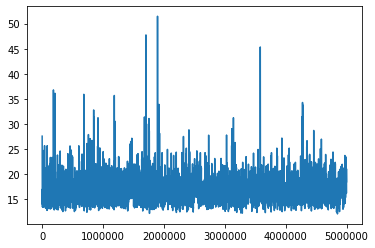}\includegraphics[width=0.45\linewidth]{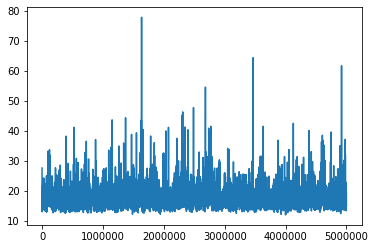}
\par\end{centering}
\begin{centering}
\includegraphics[width=0.45\linewidth]{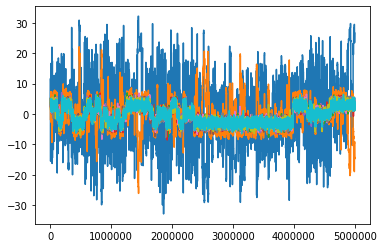}\includegraphics[width=0.45\linewidth]{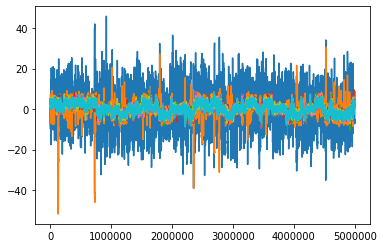}
\par\end{centering}
\begin{centering}
\includegraphics[width=0.45\linewidth]{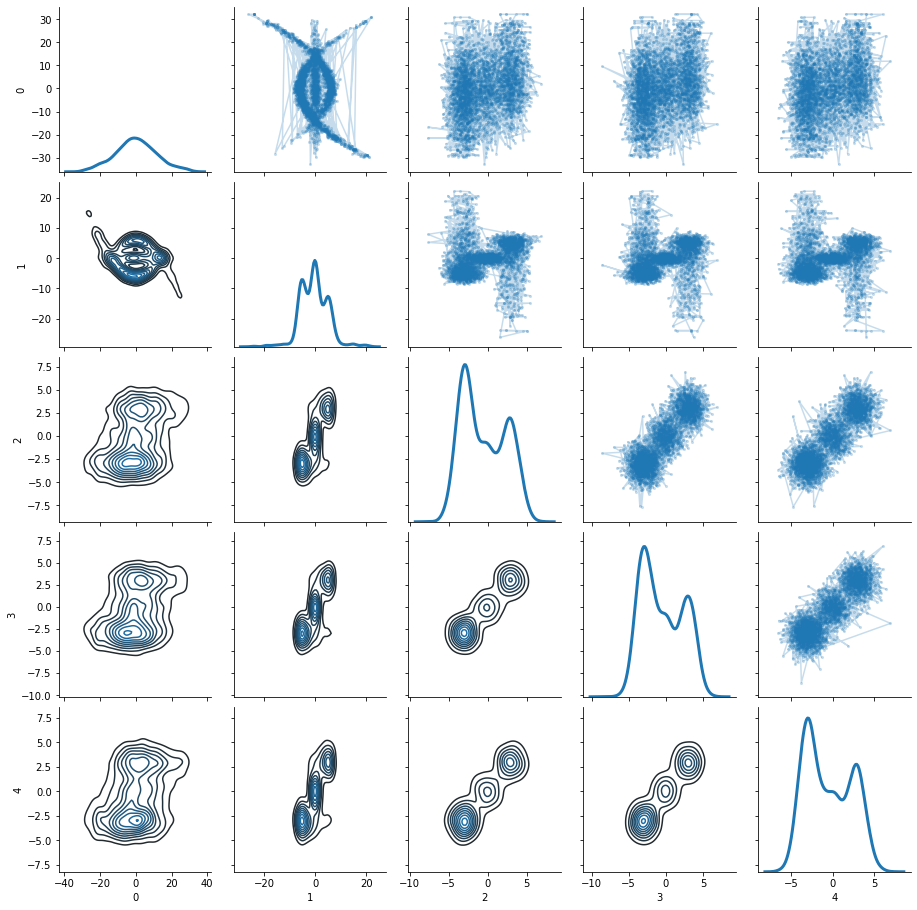}\includegraphics[width=0.45\linewidth]{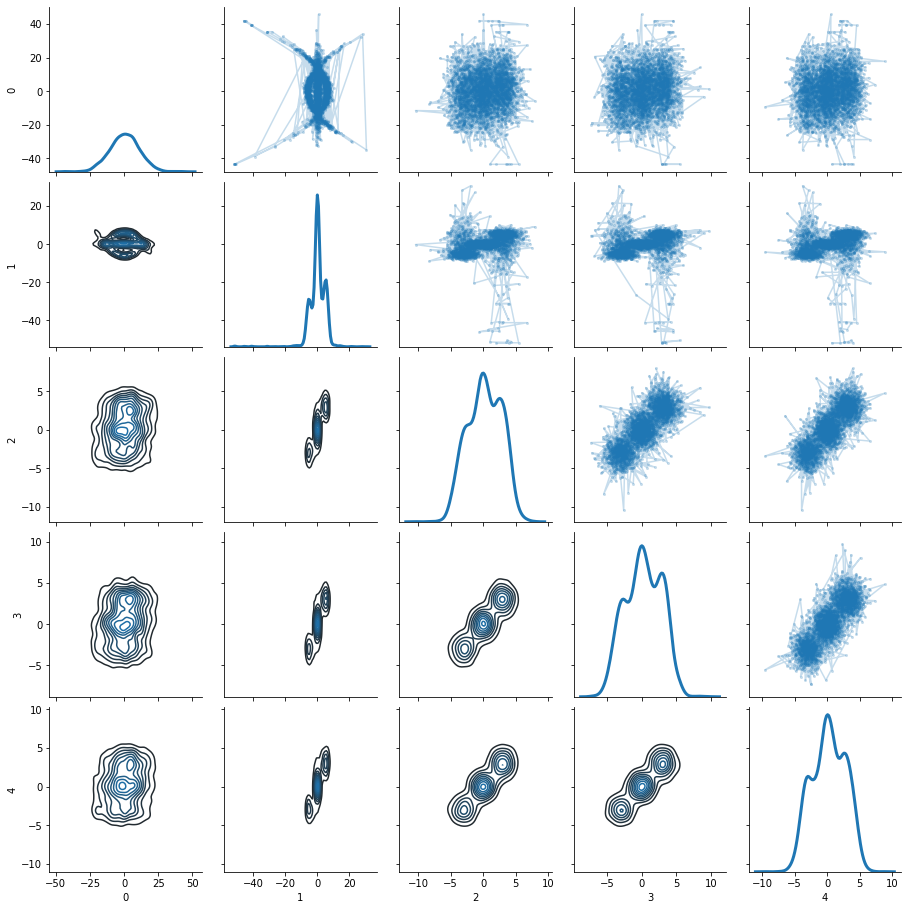}
\par\end{centering}
\caption{Plots from Example 4 for t-walk (left column plots)
and penalised t-walk (right column plots). Top-row figures correspond to trace plots for the value of $-\log(\pi)$. Middle-row plots are the traces for the first 5 components of the chain. Bottom-row figures correspond to the KDE plots (up to the first 5 components) estimated from the corresponding samples.}\label{fig:Fig4_10d3Mix}
\end{figure}

\section{Combining samples}\label{sec:combining}

One can always find a sufficiently pathological example in which modes are too isolated in order for our MCMC to visit them all. Even the penalised t-walk may not work correctly in such an example. This will be the case, for instance, if the two modes in Example 1 are placed extremely apart from each other.

As already mentioned, in the practise of MCMC one should always experiment with different starting values, to corroborate that our chain is not getting trapped in some regions of the sample space.  If it does get trapped, and within each separated region the chain appears to be mixing well, it becomes apparent that we have several modes and the chain cannot jump from one to the next.  How could one combine the two samples in order to obtain a new sample approximately distributed as $\pi$?
Certainly, one cannot simply mix the samples since this does not consider the probability accumulated in each region by the target $\pi$. Note that the discussion that follows considers any MCMC sampler, the t-walk, the penalised t-walk or otherwise.

Suppose that the target $\pi$ has ``significant'' mass on two disjoint
regions $\mathcal{X}_{1}$ and $\mathcal{X}_{2}$, and also assume
the MCMC sampler produces a $\varphi$-irreducible and aperiodic chain
but it has a very small probability of reaching $\mathcal{X}_{2}$
from $\mathcal{X}_{1}$ and viceversa. Sampling from $\pi$ relying
on a single chain $X_{0:T}:=\left(X_{t}\right)_{t=0}^{T}$ becomes
challenging since 
\begin{align*}
\mathbb{P}\left[X_{t}\in\mathcal{X}_{\text{2}}\text{ for some }t\in {1,\dots, T}\mid X_{0}\in\mathcal{X}_{1}\right] & \leq\varepsilon_{T},
\end{align*}
with $\varepsilon_{T}\ll1$. Despite the fact that $\varepsilon_{T}\rightarrow1$
as $T\rightarrow\infty$, in practice it is usually the case that
$X_{0:T}\approx\pi_{i}\left(\cdot\right)$ for some $i\in\left\{ 1,2\right\} $,
where
\begin{align*}
\pi_{i}\left(x\right) & \propto\pi\left(x\right)\mathbf{1}\left(x\in R_{i}\right).
\end{align*}

Additionally, notice that the target can be expressed as follows
\begin{align*}
\pi\left(x\right) & =\frac{\tilde{\pi}\left(x\right)}{Z}\\
 & =\frac{\tilde{\pi}\left(x\right)}{Z}\mathbf{1}\left(x\in\mathcal{X}_{1}\right)+\frac{\tilde{\pi}\left(x\right)}{Z}\mathbf{1}\left(x\in\mathcal{X}_{2}\right)\\
 & =\frac{\tilde{\pi}_{1}\left(x\right)}{Z_{1}}\left(\frac{Z_{1}}{Z}\right)+\frac{\tilde{\pi}_{2}\left(x\right)}{Z_{2}}\left(\frac{Z_{2}}{Z}\right)\\
 & =\pi_{1}\left(x\right)\frac{Z_{1}}{Z}+\pi_{2}\left(x\right)\frac{Z_{2}}{Z},
\end{align*}
where $Z_{i}=\int_{\mathcal{\mathcal{X}}_{i}}\tilde{\pi}\left(x\right)dx\leq Z$. 
Hence, suppose we have access to a sample from $\pi_{1}$ and one from
$\pi_{2}$, how could we obtain an approximate sample from $\pi$? We now discuss a solution that only requires unbiased estimates $Z_i^{-1}$.

\subsection{Asymmetric and pseudo-marginal MCMC}

Suppose we have two different proposals $q_{21}$ and $q_{12}$, acting
on the disjoint subsets $\mathcal{X}_{1}$ and $\mathcal{X}_{2}$,
respectively. A simple asymmetric MCMC kernel can be constructed as
follows
\begin{align*}
P\left(x,dy\right) & =P_{12}\left(x,dy\right)\mathbf{1}\left(x\in\mathcal{X}_{1}\right)+P_{21}\left(x,dy\right)\mathbf{1}\left(x\in\mathcal{X}_{2}\right),
\end{align*}
where the sub-kernel $P_{ij}\left(x,dy\right)=q_{ij}\left(x,dy\right)\alpha_{ij}\left(x,y\right)+\delta_{x}\left(dy\right)\rho_{ij}\left(x\right)$
considering
\begin{align*}
\alpha_{ij}\left(x,y\right) & =\min\left\{ 1,\frac{\pi\left(y\right)q_{ji}\left(y,x\right)}{\pi\left(x\right)q_{ij}\left(x,y\right)}\right\} ,
\end{align*}
and $\rho_{ij}\left(x\right)=1-\int_{\mathcal{X}}q_{ij}\left(x,dy\right)\alpha_{ij}\left(x,y\right)$.
It is not difficult to show that the kernel $P$ is invariant under
the target distribution $\pi$. More general implementations of this
idea are possible for which there are more than 2 proposals $\left\{ q_{ij}\right\} _{i,j}$
and also selecting sub-kernels at random \citep[see e.g][]{Tjelmeland2001,Andrieu2018}.

In our context, we consider two samples $X_{1:N}\sim\pi_{1}\left(\cdot\right)$
and $Y_{1:N}\sim\pi_{2}\left(\cdot\right)$,
and we would like to use the sub-targets $\left\{ \pi_{i}\right\} _{i}$
as proposals. Sampling is possible, however evaluating the corresponding densities is a challenge since the normalising constants
$Z_{i}=\int_{\mathcal{X}_{i}}\tilde{\pi}\left(dx\right)$ are not
known. The acceptance probability for this scenario, and when $x\in\mathcal{X}_{i}$,
becomes
\begin{align*}
\alpha_{ij}\left(x,y\right) & =\min\left\{ 1,\frac{\pi\left(y\right)\pi_{i}\left(x\right)}{\pi\left(x\right)\pi_{j}\left(y\right)}\right\} =\min\left\{ 1,\frac{Z_{i}^{-1}}{Z_{j}^{-1}}\mathbf{1}\left(x\in\mathcal{X}_{i}\right)\right\} .
\end{align*}
We require a procedure for estimating the ratio of normalising constants in
order to obtain an exact algorithm, if possible. One approach is to
estimate unbiasedly each term $Z_{i}^{-1}$ and resort to pseudo-marginal
(PM) theory \citep{Beaumont2003a,Andrieu2009}. Notice that for any density $h_{i}$ defined on $\mathcal{X}_{i}$
\begin{align*}
Z_{i}^{-1} & =\int_{\mathcal{X}_{i}}\frac{h_{i}\left(u\right)}{\tilde{\pi}\left(u\right)}\pi_{i}\left(du\right)\\
 & =\mathbb{E}_{U\sim\pi_{i}}\left[\frac{h_{i}\left(U\right)}{\tilde{\pi}\left(U\right)}\right].
\end{align*}
This type of estimators are not new in the literature \cite{Newton1994,Gelfand1994}, and have proved useful for estimating ratios of marginal likelihoods or Bayes factors, see e.g. \citep{Capistran2016b} and references therein. 

The resulting PM algorithm will target the extended distribution $\bar{\pi}$
defined on $\mathcal{X}^{2}$ with density
\begin{alignat*}{1}
\bar{\pi}\left(x,u\right) & =\pi\left(x\right)\left[h_{2}\left(u\right)\mathbf{1}\left(x\in\mathcal{X}_{1}\right)+h_{1}\left(u\right)\mathbf{1}\left(x\in\mathcal{X}_{2}\right)\right],
\end{alignat*}
proposing moves using $\bar{q}_{ij}\left(x,u;dy,dv\right)=\pi_{j}\left(dy\right)\pi_{i}\left(dv\right)$. Notice that the extended target $\bar{\pi}$ admits $\pi$ as one of its marginals since $\int \bar{\pi}(x,u)du=\pi(x)$, and the corresponding acceptance probability $\bar{\alpha}_{ij}$
(when $x\in\mathcal{X}_{i}$) simplifies to
\begin{align*}
\bar{\alpha}_{ij}\left(x,u;y,v\right) & =\min\left\{ 1,\frac{\bar{\pi}\left(y,v\right)\bar{q}_{ij}\left(y,v;x,u\right)}{\bar{\pi}\left(x,u\right)\bar{q}_{ij}\left(x,u;y,v\right)}\right\} \\
 & =\min\left\{ 1,\frac{h_{i}\left(v\right)/\tilde{\pi}\left(v\right)}{h_{j}\left(u\right)/\tilde{\pi}\left(u\right)}\right\} .
\end{align*}

Additionally, one could use multiple instances of the auxiliary variables
$U$ and $V$ discussed above, and modify the acceptance $\bar{\alpha}_{ij}$ accordingly for using ratios of arithmetic averages instead. It is well-known that doing this does not affect the exactness of the algorithm \citep{Beaumont2003a,Andrieu2009}, and in fact increasing the accuracy of the unbiased estimators within the acceptance ratio will typically reduce the asymptotic variance of the resulting chain \citep{Andrieu2015}.

Up to this point we have not discussed possible choices for $h_i$ and $h_j$. In principle they can be arbitrary, but ideally one would like $h_i$ to resemble $\pi$, restricted to $\mathcal{X}_i$. A feasible option, and as discussed in the following section, is to set $h_i$ equal to the resulting density from a KDE algorithm using a sample from $\pi_i$.

\subsubsection{Approximate sampling via KDE}

Using KDE one can construct an unbiased estimator for the reciprocal
of the normalising constants $Z_{1}$ and $Z_{2}$. Having
samples $X_{1:N}$ and $Y_{1:N}$ coming from $\pi_{1}$ and $\pi_{2}$,
respectively, if we construct an estimated kernel leaving out the $i$-th element
of the sample, say $X_{i}$, we obtain the approximate density $\widehat{\pi}_{1}^{(-i)}:\mathcal{X}_{1}\rightarrow\mathbb{R}_{0}^{+}$,
where the superscript $(-i)$ indicates we are not considering $X_{i}$.
Similarly using the sample $Y_{1:N}$ one can construct $\widehat{\pi}_{2}^{(-i)}$
for any $i\in\left\{ 1,\dots,N\right\} $. The reason for leaving
out a variable becomes clear when computing the following expectation using the Tower property:
\begin{align*}
\mathbb{E}\left[\frac{\widehat{\pi}_{1}^{(-i)}\left(X_{i}\right)}{\tilde{\pi}\left(X_{i}\right)}\right] & =\mathbb{E}\left[\mathbb{E}\left[\frac{\widehat{\pi}_{1}^{(-i)}\left(X_{i}\right)}{\tilde{\pi}\left(X_{i}\right)}\mid X_{(-i)}\right]\right]\\
 & =\mathbb{E}\left[\int_{\mathcal{X}_{1}}\frac{\widehat{\pi}_{1}^{(-i)}\left(x\right)}{\tilde{\pi}\left(x\right)}\pi_{1}\left(dx\right)\right]\\
 & =Z_{1}^{-1}\mathbb{E}\left[\int_{\mathcal{X}_{1}}\widehat{\pi}_{1}^{(-i)}\left(x\right)dx\right]\\
 & =Z_{1}^{-1}.
\end{align*}
Therefore, the ratio
\begin{align}
R_{j}^{(i)} & :=\frac{\widehat{\pi}_{j}^{(-i)}\left(X_{i}\right)}{\tilde{\pi}\left(X_{i}\right)}\label{eq:unbiasedRatios}
\end{align}
is an unbiased estimator of $Z_{j}^{-1}$ for any $i\in\{1,\dots,N\}$ and any $j\in\{1,2\}$.

Therefore, we will use ratios from \eqref{eq:unbiasedRatios} for approxi\-mating the pseudo-marginal process discussed in the previous section, meaning that $h_j=\widehat{\pi}_{j}^{(-i)}$ for some $i\in\left\{1,\dots,N\right\}$. Algorithm \ref{alg:JumpApprox} describes the process for obtaining an approximate draw from $\pi$, using the samples $X_{1:N}\sim \pi_1$ and $Y_{1:N}\sim \pi_2$. We must stress that this algorithm is not exact since the estimators $\left\{ R_{j}^{(i)} \right\}_{i,j}$ depend on the original samples $X_{1:N}$ and $Y_{1:N}$. 

Notice also that the algorithm produces a chain of indices $(M_t,I_t)_{t\geq 0}$, rather than a chain of values in $\mathcal{X}$. The first index ($M_t$) indicates which mode is being sampled, whereas the second one ($I_t$) corresponds to the variable sampled within $X_{1:N}$ (if $M_t=1$) or $Y_{1:N}$ (if $M_t=2$). With this in mind it is straightforward to obtain the actual chain of values, which is a sample approximately coming from $\pi$.

\begin{algorithm}
INPUT: Samples $x_{1:N}\leftarrow X_{1:N}\sim \pi_1$ and $y_{1:N}\leftarrow Y_{1:N}\sim \pi_2$. Ratios $r_{1}^{(1:N)}$ and $r_{2}^{(1:N)}$ computed using \eqref{eq:unbiasedRatios}. Current value for the chain $\left(m,i\right)\in \left\{ 1,2 \right\} \times \left\{ 1,\dots,N \right\}$.

OUTPUT: New value for the chain.

\begin{itemize}
\item If $m=1$ then:
\begin{enumerate}
    \item Set $n\leftarrow 2$;
    \item Sample and store $j\leftarrow J \sim Unif\left\{ 1,\dots,N\right\}$;
    \item Set $x\leftarrow X_i$ and  $y\leftarrow Y_j$;
    \item Return $\left(n,j\right)$ with probability $\bar{\alpha}_{1,2}^{N}\left(x,r_1^{(-i)};y,r_2^{(-j)}\right)$,
where 
\[
\bar{\alpha}_{1,2}^{N}\left(x,r_1^{(-i)};y,r_2^{(-j)}\right)=\min\left\{ 1, \frac{\sum_{k\neq i}^{N} r_1^{(k)} }{ \sum_{l\neq j}^{N} r_2^{(l)} }\right\} ;
\]

otherwise return $\left(m,i\right)$.
\end{enumerate}

\item Else:
\begin{enumerate}
    \item Set $n\leftarrow 1$;
    \item Sample and store $j\leftarrow J \sim Unif\left\{ 1,\dots,N\right\}$;
    \item Set $x\leftarrow Y_i$ and  $y\leftarrow X_j$;
    \item Return $\left(n,j\right)$ with probability $\bar{\alpha}_{2,1}^{N}\left(x,r_2^{(-i)};y,r_1^{(-j)}\right)$,
where 
\[
\bar{\alpha}_{2,1}^{N}\left(x,r_2^{(-i)};y,r_1^{(-j)}\right)=\min\left\{ 1, \frac{\sum_{k\neq i}^{N} r_2^{(k)} }{ \sum_{l\neq j}^{N} r_1^{(l)} }\right\} ;
\]

otherwise return $\left(m,i\right)$.
\end{enumerate}
\end{itemize}
\caption{Jumping modes using KDE}\label{alg:JumpApprox}
\end{algorithm}

\begin{example}
We consider a modified version of Example 1, where the two components are still the same but now the weights are different, i.e.
\begin{align*}
\pi\left(x_{1},x_{2}\right) & =0.1\mathcal{N}\left(\left(x_{1},x_{2}\right)\mid\mu_{\text{1}},\Sigma_{1}\right)+0.9\mathcal{N}\left(\left(x_{1},x_{2}\right)\mid\mu_{\text{2}},\Sigma_{2}\right).
\end{align*}

In this case we set $\pi_i=\mathcal{N}\left(\mu_{\text{i}},\Sigma_{i}\right)$ and draw 10 thousand samples from each distribution. Using Algorithm \ref{alg:JumpApprox} we aim at combining these samples that lie on well-separated regions of the state space. Figure \ref{fig:Fig5_combine} compares the KDE plots resulting from the proposed method (left) and a simple resampling process with the correct weights (right). The jumping modes algorithm was run for 100 thousand iterations showing a global acceptance of 20 percent. Notice that the resulting plots are almost identical; hence, even though the proposed algorithm is not exact it may prove useful for post-processing samples.
\end{example}

\begin{figure}[!ht]
\begin{centering}
\includegraphics[width=0.45\linewidth]{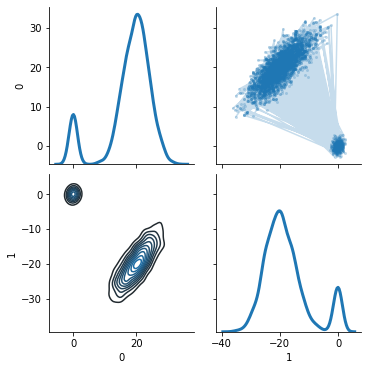}\includegraphics[width=0.45\linewidth]{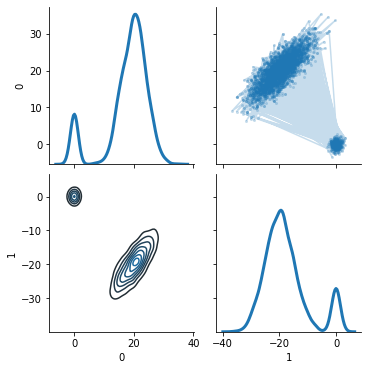}
\par\end{centering}
\caption{KDE plots from Example 5 for the jumping modes algorithm  (left) and a resampling process using the correct weights (right).}\label{fig:Fig5_combine}
\end{figure}

\section{Final discussion}\label{sec:discussion}

The introduction of penalised moves into the t-walk algorithm can be useful for tackling  multimodality in complex inference scenarios. In principle one could implement the aforementioned moves, either gradient-based or non-gradient based, into other MCMC samplers. In this respect, modifying the penalty in adaptive manner may sound appealing, although one must be careful of not breaking the ergodicity of the chain.

The penalised t-walk works well in various examples, but as observed, the algorithm may still struggle to jump between modes in complex or high-dimensional settings. This is due to chances of finding a new mode, hence accepting a penalised move, may become quite small. For such cases, we presented an approach for combining samples from two different samplers, that in theory should be able to visit all modes, but realistically they will get stuck in two separate regions of the state space.

We want to emphasise that combining samples, as presented here, should be regarded as a last resort since the validity relies on strong assumptions. For instance, the two samples should be of decent quality, meaning that the corresponding MCMC samplers may need to run for several iterations. Additionally, the samples need to lie on disjoint, (or almost disjoint) regions of the state space, which may be difficult to asses in higher dimensions. Nevertheless, extending the method for non-disjoint sets and for more than two samples should be possible, but has been postponed for future research.

\section{Acknowledgments}

The authors were partially founded by CONACYT CB-2016-01-284451 and COVID19 312772 grants, and a RDCOMM grant.

%\bibliographystyle{abbrv}
%\bibliography{PenalisedTWalk}

\end{document}